\newcommand{\etal}{et al.}
\newcommand{\AOccur}{$\approx$-occurrence}
\newcommand{\AOccurs}{$\approx$-occurrences}
\newcommand{\ADic}{$\approx$-dictionary matching}
\newcommand{\AEn}{$\approx$-prefix encoding}
\newcommand{\AMatch}{$\approx$-match}
\newcommand{\OCC}{\mathit{occ}}
\newcommand{\Queue}{\mathit{queue}}
\newcommand{\En}[1]{\langle #1 \rangle}
\newcommand{\Null}{\mathsf{NULL}}
\newcommand{\Pref}{\mathsf{Pref}}
\newcommand{\SCERA}{\mathsf{SCERA}}
\newcommand{\AC}{\mathsf{AC}}
\newcommand{\Goto}{\delta}
\newcommand{\Fail}{\mathsf{fail}}
\newcommand{\Out}{\mathsf{out}}
\newcommand{\Depth}{\mathsf{dep}}
\newcommand{\Label}{\mathsf{label}}
\newcommand{\ConstGoto}{\mathsf{constructGoto}}
\newcommand{\ConstFail}{\mathsf{constructFailure}}
\newcommand{\ConstOut}{\mathsf{constructOutput}}
\newcommand{\Match}{\mathsf{Matching}}
\newcommand{\Root}{\mathit{root}}
\newcommand{\DD}{\mathcal{D}}
\newcommand{\CS}{\mathcal{S}}
\newcommand{\EnTime}[1]{\xi(#1)}
\newcommand{\REnTime}[1]{\phi(#1)}
\newtheorem{definition}{Definition}
\newtheorem{lemma}{Lemma}
\newtheorem{theorem}{Theorem}
\begin{document}
\title{Generalized Dictionary Matching under Substring Consistent Equivalence Relations}
%
%
\author{Diptarama Hendrian}
%
%
\affil{Graduate school of information sciences, Tohoku University, Japan\\
\texttt{diptarama@tohoku.ac.jp}}
\date{}
\maketitle              
\begin{abstract}
Given a set of patterns called a dictionary and a text,
the dictionary matching problem is a task to find all occurrence positions of all patterns
in the text.
The dictionary matching problem can be solved efficiently by using the Aho-Corasick algorithm.
Recently, Matsuoka et al. [TCS, 2016] proposed a generalization of pattern matching problem under substring consistent equivalence relations
and presented a generalization of the Knuth-Morris-Pratt algorithm to solve this problem.
An equivalence relation $\approx$ is a substring consistent equivalence relation (SCER)
if for two strings $X,Y$, $X \approx Y$ implies $|X| = |Y|$ and $X[i:j] \approx Y[i:j]$ for all $1 \le i \le j \le |X|$.
In this paper, we propose a generalization of the dictionary matching problem
and present a generalization of the Aho-Corasick algorithm for the dictionary matching under SCER.
We present an algorithm that constructs SCER automata and an algorithm that performs dictionary matching under SCER by using the automata.
Moreover, we show the time and space complexity of our algorithms with respect to the size of input strings.

\end{abstract}
\section{Introduction}
\emph{The pattern matching problem} is one of the most fundamental problems in string processing and extensively studied due to its wide range of applications~\cite{Crochemore1994,Crochemore2002}.
Given a text $T$ of length $n$ and a pattern $P$ of length $m$, the pattern matching problem is to find all occurrence positions of $P$ in $T$.
A naive approach to solve this problem is by comparing all substrings of $T$ whose length is $m$ to $P$ which takes $O(nm)$ time.
One of the algorithms that can solve this problem in linear time and space is the \emph{Knuth-Morris-Pratt (KMP) algorithm}~\cite{Knuth1977}.
The KMP algorithm constructs an $O(m)$ space array as a failure function by preprocessing the pattern in $O(m)$ time, then uses the failure function to perform pattern matching in $O(n)$ time.

Many variants of pattern matching problems are studied for various applications such as parameterized pattern matching~\cite{Baker1996} for detecting duplication in source code, order-preserving pattern matching~\cite{Kim2014,Kubica2013} for numerical analysis,
permuted pattern matching~\cite{Katsura2013} for multi sensor data,
and so on~\cite{Park2019}.
In order to solve these problems efficiently,
the KMP algorithm is extended for the above mentioned pattern matching problems~\cite{Amir1994,Diptarama2016,Hendrian2019,Kim2014,Kubica2013,Park2019}.

Recently, Matsuoka~\etal{}~\cite{Matsuoka2016} defined a general pattern matching problem under a substring consistent equivalence relation.
An equivalence relation $\approx$ for two strings $X \approx Y$ is a \emph{substring consistent equivalence relation} (\emph{SCER})~\cite{Matsuoka2016}
if for two strings $X,Y$, $X \approx Y$ implies $|X| = |Y|$ and $X[i:j] \approx Y[i:j]$ for all $1 \le i \le j \le |X|$. 
The equivalence relations used in parameterized pattern matching, order-preserving pattern matching, and permuted pattern matching are SCERs.
Matsuoka~\etal{} proposed a generalized KMP algorithm that can solve any pattern matching under SCER
and showed the time complexity of the algorithm.
They also show periodicity properties on strings under SCERs.

The \emph{dictionary matching problem} is a task to find all occurrence positions of multiple patterns in a text.
Given a set of patterns called a dictionary $\DD$,
we can find the occurrence positions of all patterns in a text $T$ by performing pattern matching for each pattern in the dictionary.
However, we need to read the text multiple times in this approach.
Aho and Corasick~\cite{Aho1975} proposed an algorithm that can perform dictionary matching in linear time by extending the failure function of the KMP algorithm.
The Aho-Corasick (AC) algorithm constructs an automaton (we call this automaton as an \emph{AC-automaton}) from $\DD$
and then uses this automaton to find the occurrences of all patterns in the text.
The AC-automaton of $\DD$ uses $O(m)$ space and can be constructed in $O(m\log |\Sigma|)$ time,
where $m$ is the sum of the length of all patterns in $\DD$ and $|\Sigma|$ is the alphabet size.
By using an AC-automaton, all occurrences of patterns in $T$ can be found only by reading $T$ once, which takes $O(n\log |\Sigma|)$ time.
Similarly to the KMP algorithm,
the AC algorithm is also extended 
to perform dictionary matching on some variant of strings~\cite{Diptarama2016fast,Hendrian2019,Idury1996,Katsura2013,Kim2014}.
In order to perform dictionary matching efficiently,
the extended AC algorithms encode the patterns in a dictionary
and create an automaton from the encoded patterns instead of the patterns itself.

\begin{table}[t]
	\caption{The time complexity of the proposed algorithm on some dictionary matching problems.} 
	\label{table:example}
	\centering
	\setlength{\tabcolsep}{3pt}
	\begin{tabular}{|l|c|c|c|c|c|}
		\hline
		& $\EnTime{n}$ & $\REnTime{\ell}$ & $\pi$ & Preprocessing & Searching \\
		\hline
		Exact  & $O(1)$ & $O(1)$ & $|\Sigma|$ & $O(m \log|\Sigma|)$ & $O(n \log|\Sigma| + \OCC)$ \\
		Parameterized  & $O(n)$ & $O(1)$ & $|\Sigma|$ & $O(m \log|\Sigma|)$ & $O(n \log|\Sigma| + \OCC)$  \\
		Order-preserving & $O(1)$ & $O(\log\ell)$ & $\ell$ & $O(m \log \ell)$ & $O(n \log\ell + \OCC)$ \\
	
		\hline
	\end{tabular}
\end{table}

In this paper, we propose a generalization of the Aho-Corasick algorithm for dictionary matching under SCER.
The proposed algorithm encodes the patterns in the dictionary,
and then constructs an automaton with a failure function called a \emph{substring consistent equivalence relation automaton} (\emph{SCER automaton}) from the encoded strings.
We present an algorithm to construct SCER automata and show how to perform dictionary matching by using SCER automata.
Suppose we can encode a string $X$ in $\EnTime{|X|}$ time and
re-encode $X[i:j][j-i+1]$ in $\REnTime{|X[i:j]|}$ time,
we show that the size of SCER automaton is $O(m)$
and can be constructed in $O(\EnTime{m} + m \cdot (\REnTime{\ell}+\log\pi))$ time,
where $m$ is the sum of the length of all patterns in the dictionary, $\ell$ is the length of the longest patterns in the dictionary, and $\pi$ is the maximum number of possible outgoing transitions from any state.
Moreover, we show that the dictionary matching under SCER can be performed in $O(\EnTime{n} + n \cdot (\REnTime{\ell}+\log\pi))$ time by using SCER automata, where $n$ is the length of the text.
By using our algorithm,
we can perform dictionary matching under any SCER.
\Cref{table:example} shows the time complexity of our algorithm on some dictionary matching problems.
\section{Preliminaries}
Let $\Sigma$ and $\Pi$ be integer alphabets, and $\Sigma^*$ (resp. $\Pi^*$) be the set of all strings over $\Sigma$ (resp. $\Pi$).
The empty string $\varepsilon$ is the string of length 0.
We assume that the size of any symbol in $\Sigma$ and $\Pi$ is constant
and a comparison of any two symbols in $\Sigma$ or $\Pi$ can be done in constant time in word RAM model.
For a string $T \in \Sigma^*$,
$|T|$ denotes the length of $T$.
For $1 \le i \le j \le |T|$, $T[i]$ denotes the $i$-th character of $T$
and $T[i:j]$ denotes the \emph{substring} of $T$ that starts at $i$ and ends at $j$.
Let $T[:j] = T[1:j]$ denote the \emph{prefix} of $T$ that ends at $j$ and
$T[i:] = T[i:|T|]$ denote the \emph{suffix} of $T$ that starts at $i$.
For convenience we define $T[i:j] = \varepsilon$ if $j < i$.
Note that $\varepsilon$ is a substring, a prefix, and a suffix of any string.
For a string $T$, let $\Pref(T)$ denote the set of all prefixes of $P$.
For two strings $X$ and $Y$, we denote by $XY$ or $X \cdot Y$ the concatenation of $X$ and $Y$.

Let $\DD = \{P_1,P_2,...,P_d\}$ be a set of patterns, called a \emph{dictionary}.
Let $|\DD|$ denote the number of patterns in $\DD$ and $\|\DD\| = \Sigma_{k=1}^{|\DD|}|P_k|$ denote the total length of the patterns in $\DD$.
For a dictionary $\DD$,
$\Pref(\DD) = \bigcup_{k=1}^{|\DD|}\Pref(P_k)$ is the set of all prefixes of the patterns in $\DD$.

\begin{figure}[t]
	\centering
	\includegraphics[scale=0.6]{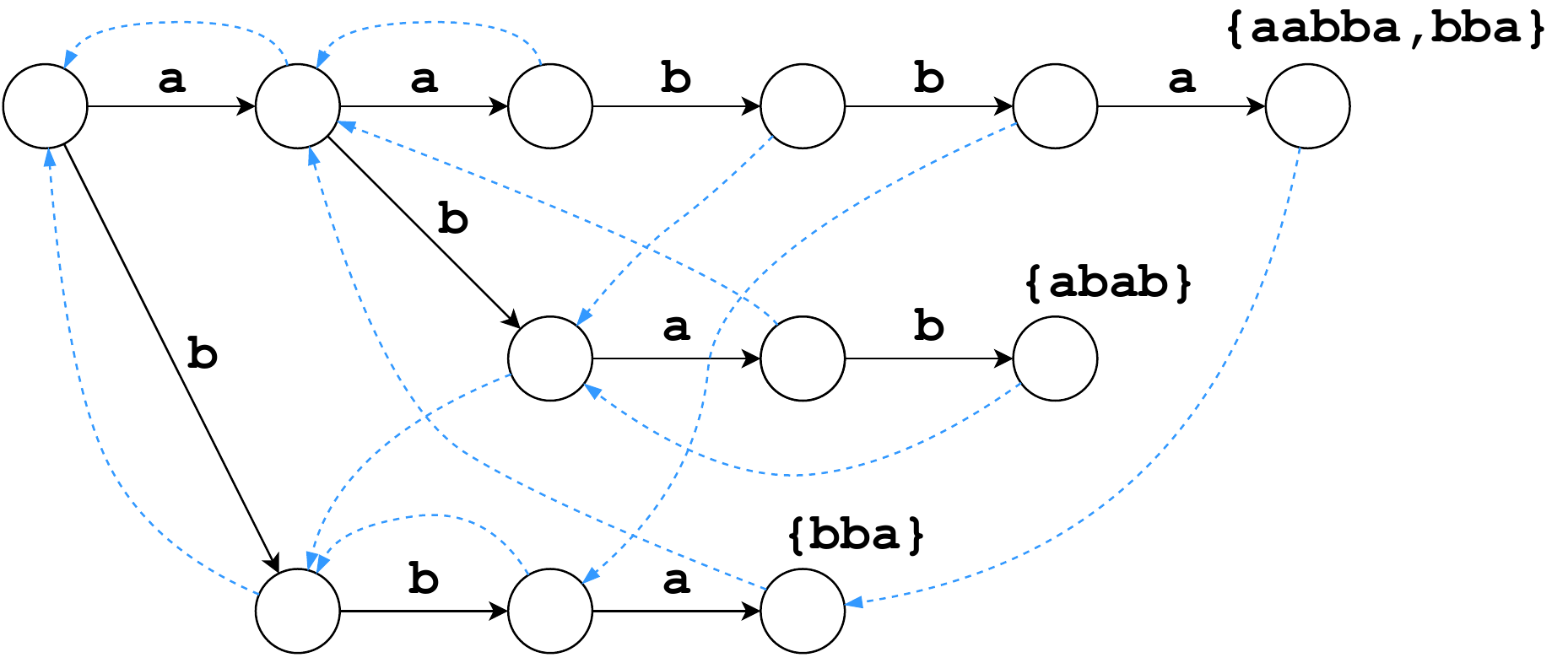}
	\caption{The AC-automaton of $\DD=\{\mathtt{aabba},\mathtt{abab},\mathtt{bba}\}$.
	The solid arrows represent the goto function, the dashed blue arrows represent the failure function, and the sets of strings represent the output function.}
	\label{fig:ac}
\end{figure}

The Aho-Corasick automaton~\cite{Aho1975} $\AC(\DD)$ of $\DD$ consists of a set of states 
and three functions: goto, failure, and output functions.
Each state of $\AC(\DD)$ corresponds to a prefix in $\Pref(\DD)$.
The goto function $\Goto_{\DD}$ of $\AC(\DD)$ is defined so that $\Goto_{\DD}(P_k[:i],P_k[i+1]) = P_k[:i+1]$,
for any $P_k \in \DD$ and $1 \le i < |P_k|$.
The failure fuction $\Fail_{\DD}$ of $\AC(\DD)$ is defined so that $\Fail_{\DD}(P_k[:i]) = P_k[j:i]$ where $j = \min\{l \mid l > 1, P_k[l:i] \in \Pref(\DD)\}$.
The output fuction $\Out_{\DD}$ of $\AC(\DD)$ is defined as $\Out_{\DD}(P_k[:j]) = \{P \in \DD \mid P = P_k[i:j] \mbox{ for some } 1 \le i \le j\}$.
\Cref{fig:ac} shows an example of AC-automaton.
We will define a generalization of AC-automata later in \Cref{sec:scer_automata}.

Next, we define the class of equivalence relations that we consider in this paper called \emph{substring consistent equivalence relations}.
\begin{definition}[Substring consistent equivalence relation (SCER) $\approx$~\cite{Matsuoka2016}]
	An equivalence relation $\approx \ \subseteq \Sigma^* \times  \Sigma^*$ is a \emph{substring consistent equivalence relation (SCER)} if 
	for two string $X$ and $Y$, $X \approx Y$ implies $|X| = |Y|$ and $X[i:j] \approx Y[i:j]$ for all $1 \le i \le j \le |X|$.
\end{definition}
We say $X$ \AMatch es $Y$ iff $X \approx Y$.
For instance, the matching relations in
parameterized pattern matching~\cite{Baker1996}, order-preserving pattern matching~\cite{Kim2014,Kubica2013},
and permuted pattern matching~\cite{Katsura2013} are SCERs,
while the matching relations in indeterminate string pattern matching~\cite{Antoniou2008} and function matching~\cite{Amir2006} are not.

Matsuoka~\etal~\cite{Matsuoka2016} define occurrences of a pattern in a text under an SCER $\approx$, which is used to define the pattern matching under SCERs as follows.
\begin{definition}[\AOccur{}]
	For two strings $T$ and $P$, a position $i$,  $1 \le i \le |T| - |P| +1$, is an \AOccur{} of $P$ in $T$ iff $P \approx T[i:i+|P|-1]$.
\end{definition}
By using the above definition we define the \emph{dictionary matching under SCERs}.
\begin{definition}[\ADic{}]
Given a dictionary $\DD = \{P_1,P_2,\dots,P_d\}$ and a text $T$,
the dictionary matching with respect to an SCER $\approx$ (\ADic)
is a task to find all \AOccurs{} of $P_k$ in $T$ for all $P_k \in \DD$.
\end{definition}

In order to perform some variants of dictionary matching fast,
encodings are used on strings.
For instance,
the prev-encoding is used for parameterized pattern matching~\cite{Idury1996}
and the nearest neighbor encoding is used for order-preserving pattern matching~\cite{Kim2014}.
Following the previous research,
we generalize these encodings for SCERs as follows.
\begin{definition}[\AEn]
Let $\Sigma$ and $\Pi$ be alphabets.
We say an encoding function $f:\Sigma^* \rightarrow \Pi^*$ is a prefix encoding with respect to an SCER $\approx$ (\AEn) if
(1) for a string $X$, $|X| = |f(X)|$,
(2) $f(X[:i]) = f(X)[:i]$, and
(3) for two strings $X$ and $Y$,
	$f(X) = f(Y)$ iff $X \approx Y$.
\end{definition}
We can easily confirm that both the prev-encoding~\cite{Baker1996} and the nearest neighbor encoding~\cite{Kim2014} are prefix encodings.
Amir and Kondratovsky~\cite{Amir2019} show that there exists a prefix encoding for any SCER.
By using a \AEn{},
if $X[:i] \approx Y[:i]$ we can check whether $X[:i+1] \approx Y[:i+1]$
just by checking whether $f(X)[i+1] = f(Y)[i+1]$.
Therefore, \ADic{} can be performed fast by using prefix encoded strings.

For a string $P$ and prefix encoding $f$,
let we denote $f(P)$ by $\En{P}$ for simplicity.
For a dictionary $\DD = \{P_1,P_2,\dots,P_d\}$, 
let $\En{\DD}=\{\En{P_1},\En{P_2},\dots,\En{P_d}\}$
and $\En{\Pref(\DD)} = \Pref(\En{\DD}) = \bigcup_{k=1}^d \Pref(\En{P_k})$.

Throughout the paper, let $T$ be a text of length $n$, $\DD$ be a dictionary, $d = |\DD|$, $m = \|\DD\|$, and $\ell = \max\{|P_k|\mid P_k \in \DD\}$.
Let $\Pi^*$ be the co-domain of a \AEn{}.
For a string $X$, suppose that $\En{X}$ can be computed in $\EnTime{|X|}$ time.
Assuming that $\En{X}$ has been computed,
suppose we can re-encode $\En{X[i:j]}[j-i+1]$ in $\REnTime{|X[i:j]|}$ time.
\section{SCER automata}\label{sec:scer_automata}

In this section, we propose automata for the \ADic{} problem called 
\emph{substring consistent equivalence relation automata (SCERAs)}.
First, we describe the definition and properties of the SCERA for a dictionary $\DD$,
then show the size of the SCERA of $\DD$ with respect to the size of $\DD$.
After that, we propose a \ADic{} algorithm by using SCERAs and
show the time complexity of the proposed algorithm.
Last, we present an algorithm to construct SCERAs and show its time complexity.

\subsection{Definition and properties}\label{subsec:defprop}

For a dictionary $\DD=\{P_1,P_2,\dots,P_d\}$,
the substring consistent equivalence relation automaton $\SCERA(\DD)$ of $\DD$ 
consists of a set of states and three functions, namely \emph{goto}, \emph{failure}, and \emph{output} functions.

The set of states $\mathcal{S}_{\DD}$ of $\SCERA(\DD)$ defined as follows.
\begin{align*}
	\mathcal{S}_{\DD} =\{S \mid S \in \Pref(\En{\DD}) \}
\end{align*}
Each state of $\SCERA(\DD)$ corresponds to a prefix of $\En{P_k}$ for some $P_k \in \DD$,
thus we can identify each state by the corresponded prefix.
Since the number of prefixes of $P_k$ is $|P_k| + 1$ and $|P_k|=|\En{P_k}|$, the number of states of $\SCERA(\DD)$ is as follows.
\begin{lemma}\label{lem:state_num}
	For a dictionary $\DD$, the number of states $|\mathcal{S}_{\DD}|$ of $\SCERA(\DD)$ 
	is $O(m)$.
\end{lemma}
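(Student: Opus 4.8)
The plan is to bound $|\mathcal{S}_{\DD}|$ directly from the definition $\mathcal{S}_{\DD} = \Pref(\En{\DD})$ using subadditivity of set cardinality over a union. First I would recall that $\Pref(\En{\DD}) = \bigcup_{k=1}^d \Pref(\En{P_k})$ by definition, so that a crude union bound already gives
\begin{align*}
	|\mathcal{S}_{\DD}| \;=\; \left| \bigcup_{k=1}^d \Pref(\En{P_k}) \right| \;\le\; \sum_{k=1}^d |\Pref(\En{P_k})|.
\end{align*}

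Next I would count the prefixes of a single encoded pattern. By property (1) of a \AEn{}, $|\En{P_k}| = |P_k|$, and a string of length $L$ has exactly $L+1$ prefixes once the empty string $\varepsilon$ is counted. Hence $|\Pref(\En{P_k})| = |P_k| + 1$, and substituting into the bound above yields $|\mathcal{S}_{\DD}| \le \sum_{k=1}^d (|P_k| + 1) = m + d$. Finally, using the standing assumption that the patterns are nonempty (so $|P_k| \ge 1$ and therefore $d \le m$), I would conclude $|\mathcal{S}_{\DD}| \le m + d \le 2m = O(m)$.

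The argument is almost entirely routine, and I do not expect a genuine obstacle. The one point worth stating explicitly is that the encoding $f$ can only \emph{merge} states: two distinct prefixes of different encoded patterns may coincide in $\Pref(\En{\DD})$, but this can only decrease the cardinality relative to the summed count, so the union bound is already valid without any injectivity claim. The only corner case to flag is that of empty patterns, which I would handle by the convention that $\DD$ contains no empty pattern (otherwise one replaces $d$ by the number of nonempty patterns, leaving the $O(m)$ bound intact).
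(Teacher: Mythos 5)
Your proof is correct and matches the paper's own (very brief) argument: the paper likewise counts $|P_k|+1$ prefixes per pattern using $|\En{P_k}| = |P_k|$ and concludes $O(m)$, with the union bound and the $d \le m$ observation left implicit. Your added care about prefix merging and empty patterns only makes the same argument more explicit.
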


Next, we define the functions in $\SCERA(\DD)$.
First, the goto function $\Goto_{\DD}$ of $\SCERA(\DD)$ is defined as follows.
\begin{definition}[Goto function]
	The goto function $\Goto_{\DD} : \CS_{\DD} \cdot \Pi \rightarrow \CS_{\DD} \cup \{\Null\}$ of $\SCERA(\DD)$ is defined by
	\begin{align*}
		\Goto_{\DD}(S,c) = \begin{cases}
		S\cdot c & \mbox{\textup{if} } (S\cdot c) \in \CS_{\DD}, \\
		\Null & \mbox{\textup{otherwise.}}
		\end{cases}
	\end{align*}
\end{definition}
Intuitively, $\Goto_{\DD}(\En{P_k}[:j],\En{P_k}[j+1]) = \En{P_k}[:j+1]$, for any $P_k \in \DD$ and $0 \le j < |P_k|$.
The states and goto function form a trie of all encoded patterns in $\En{\DD}$.
For two states $S$ and $S'$ such that $S' = \Goto_{\DD}(S,c)$ for some $c \in \Pi$,
we call $S$ \emph{the parent} of $S'$ and $S'$ \emph{a child} of $S$.
For convenience, for a state $S$ and a string $X \in \Pi^*$, let $\Goto_{\DD}(S,X) = \Goto_{\DD}(\Goto_{\DD}(S,X[1]),X[2:])$ and $\Goto_{\DD}(S,\varepsilon) = S$.
Here we denote by $\pi$, the maximum number of possible outgoing transitions from any state.

Next, the failure function $\Fail_{\DD}$ of $\SCERA(\DD)$ is defined as follows.
\begin{definition}[Failure function]
	The failure function $\Fail_{\DD}: \CS_{\DD} \rightarrow \CS_{\DD} \cup \{\Null\}$ of $\SCERA(\DD)$ is defined by $\Fail_{\DD}(\varepsilon) = \Null$ and $\Fail_{\DD}(\En{P_k}[:j]) = \En{P_k[i:j]}$ for $1 \le j \le |P_k|$, where $i = \min \{l \mid l > 1, \En{P_k[l:j]} \in \Pref(\En{\DD})\}$.
\end{definition}
In other words, 
for any state $\En{P_k}[:j]$,
$\Fail_{\DD}(\En{P_k}[:j]) = \En{P_k[i:j]}$
iff $P_k[i:j]$ is the longest suffix of $P_k[:j]$ such that 
 $\En{P_k[i:j]} \in \Pref(\En{\DD})$.
Moreover, by the definition of prefix encoding,
$P_k[i:j]$ is also the longest suffix of $P_k[:j]$
that \AMatch es a prefix in $\Pref(\DD)$.
For convenience, 
assume that $\Fail_{\DD}(\Null) = \Null$, 
we define $\Fail_{\DD}$ recursively,
namely $\Fail_{\DD}^k(S) = \Fail_{\DD}^{k-1}(\Fail(S))$ and $\Fail_{\DD}^0(S) = S$.

The failure function has the following properties.
\begin{lemma}\label{lem:fail_rec}
	For any state $\En{P_k}[:j]$,
	if $\En{P_k[i:j]}$ is a state,
	there is $q \ge 0$ such that $\Fail_{\DD}^q({\En{P_k}}[:j]) = \En{P_k[i:j]}$.
\end{lemma}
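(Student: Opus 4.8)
The plan is to show that iterating $\Fail_{\DD}$ from $\En{P_k}[:j]$ visits, in strictly increasing order of starting position, exactly those suffixes $P_k[l:j]$ of $P_k[:j]$ whose encodings are states. Granting this, the hypothesis that $\En{P_k[i:j]}$ is a state means $P_k[i:j]$ is one of these suffixes, so it is reached after some number $q \ge 0$ of applications (with $q = 0$ when $i = 1$), which is the claim.

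The heart of the argument is a computation rule for one application of $\Fail_{\DD}$ to a suffix-state. Suppose a state of the form $\En{P_k[l:j]}$ has been reached. As it lies in $\Pref(\En{\DD})$, it equals $\En{P_{k'}[:j']}$ for some pattern $P_{k'} \in \DD$ with $j' = j - l + 1$, and property (3) of the prefix encoding gives $P_k[l:j] \approx P_{k'}[:j']$. Substring consistency of $\approx$ then yields $P_k[l+p-1:j] \approx P_{k'}[p:j']$, hence $\En{P_k[l+p-1:j]} = \En{P_{k'}[p:j']}$, for every $1 \le p \le j'$. Thus being a state is the same property for $P_{k'}[p:j']$ and for $P_k[l+p-1:j]$, and the minimizing index is preserved; applying the definition of $\Fail_{\DD}$ to the representation $\En{P_{k'}}[:j']$ and translating back gives
\[
	\Fail_{\DD}(\En{P_k[l:j]}) = \En{P_k[l':j]},
\]
where $l'$ is the smallest index with $l' > l$ and $\En{P_k[l':j]} \in \Pref(\En{\DD})$. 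The same translation shows $\Fail_{\DD}$ does not depend on which pattern names the state, so the iteration is unambiguous.

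With this rule the enumeration follows by induction. Put $l_0 = 1$ and let $l_{t+1}$ be the starting index of $\Fail_{\DD}$ applied to the state with start $l_t$; the displayed identity says $l_{t+1}$ is the smallest valid start exceeding $l_t$. Consequently $l_0 < l_1 < \cdots$ lists, with no omissions, every index $l$ for which $\En{P_k[l:j]}$ is a state. Since $i$ is such an index by hypothesis, $l_q = i$ for some $q \ge 0$, i.e. $\Fail_{\DD}^q(\En{P_k}[:j]) = \En{P_k[i:j]}$.

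I expect the delicate point to be the computation rule rather than the final enumeration, which is the standard Aho--Corasick border argument. The care lies in justifying that applying $\Fail_{\DD}$ to an intermediate suffix-state returns the next valid suffix of the \emph{original} $P_k[:j]$, even though that state may be named through an unrelated pattern $P_{k'}$; this is exactly where substring consistency of $\approx$ together with condition (3) of the prefix encoding is essential.
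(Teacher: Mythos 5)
Your proof is correct, and it is far more detailed than the paper's, which disposes of the lemma with the single line ``Straightforward by induction on $j$.'' Where the paper inducts on the endpoint $j$, you induct along the failure chain itself, via the computation rule $\Fail_{\DD}(\En{P_k[l:j]}) = \En{P_k[l':j]}$ with $l'$ the least valid start exceeding $l$, and then observe that the resulting sequence of starts enumerates all suffix-states without omission. The real value of your writeup is that it isolates the one genuinely subtle point that any complete proof must address and that the paper never makes explicit: a state reached partway down the chain is named, in the definition of $\Fail_{\DD}$, through some possibly unrelated pattern $P_{k'}$, and transferring that definition back to suffixes of the original $P_k[:j]$ requires exactly the substring consistency of $\approx$ together with property (3) of the prefix encoding (and property (2), used implicitly when you identify $\En{P_{k'}}[:j']$ with $\En{P_{k'}[:j']}$). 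The same translation also shows that $\Fail_{\DD}$ is well defined independently of which pattern names the state, a fact the paper's definition silently assumes. Both routes reach the same conclusion; the paper's is terse and leaves the representation-transfer argument to the reader, while yours is self-contained and would serve as the missing proof.
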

\begin{proof}
	Straightforward by induction on $j$.
	\qed
\end{proof}
\begin{lemma}\label{lem:fail_next}
	Consider two states $\En{P_k}[:j]$ and $\En{P_k}[:j+1]$.
	If $\Fail_{\DD}(\En{P_k}[:j+1]) \ne \varepsilon$,
	there is $q \ge 0$ such that $\Fail_{\DD}^q({\En{P_k}}[:j])$ is the parent of $\Fail_{\DD}(\En{P_k}[:j+1])$.
\end{lemma}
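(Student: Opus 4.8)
The plan is to compute $\Fail_{\DD}(\En{P_k}[:j+1])$ explicitly, read off its parent in the trie, and then use \Cref{lem:fail_rec} to show that this parent is reachable from $\En{P_k}[:j]$ by iterating the failure function. First I would write $\Fail_{\DD}(\En{P_k}[:j+1]) = \En{P_k[i:j+1]}$, where $i = \min\{l \mid l > 1,\ \En{P_k[l:j+1]} \in \Pref(\En{\DD})\}$; the hypothesis $\Fail_{\DD}(\En{P_k}[:j+1]) \ne \varepsilon$ guarantees that this $i$ exists and satisfies $1 < i \le j+1$.

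Next I would identify the parent of $\En{P_k[i:j+1]}$. The state $\En{P_k[i:j+1]}$ has length $j-i+2$, so its parent in the trie is its prefix of length $j-i+1$, obtained by deleting the last symbol. By property (2) of a prefix encoding applied to the string $P_k[i:j+1]$ with prefix length $j-i+1$, we have $\En{P_k[i:j+1]}[:(j-i+1)] = \En{P_k[i:j]}$, since $P_k[i:j]$ is the length-$(j-i+1)$ prefix of $P_k[i:j+1]$; hence the parent of $\Fail_{\DD}(\En{P_k}[:j+1])$ is exactly $\En{P_k[i:j]}$. To invoke \Cref{lem:fail_rec} I also need $\En{P_k[i:j]}$ to be a state: this holds because $\En{P_k[i:j]}$ is a prefix of $\En{P_k[i:j+1]} \in \Pref(\En{\DD})$, and $\Pref(\En{\DD})$ is prefix-closed, so $\En{P_k[i:j]} \in \CS_{\DD}$.

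Finally, since $\En{P_k[i:j]}$ is a state, \Cref{lem:fail_rec} yields some $q \ge 0$ with $\Fail_{\DD}^q(\En{P_k}[:j]) = \En{P_k[i:j]}$, which is precisely the parent of $\Fail_{\DD}(\En{P_k}[:j+1])$, completing the argument. The step that needs the most care is pinning down the parent of the failure target: it rests crucially on the prefix-consistency (property (2)) of the encoding, which ensures that truncating the encoded string corresponds to truncating the underlying substring; without it one could not identify $\En{P_k[i:j+1]}[:(j-i+1)]$ with $\En{P_k[i:j]}$. I would also check the boundary case $i = j+1$ separately: there $\En{P_k[i:j]} = \varepsilon$ is the root, which is still a state and still the parent of the length-one failure target, so \Cref{lem:fail_rec} applies unchanged.
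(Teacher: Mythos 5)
Your proof is correct and follows essentially the same route as the paper's: identify $\Fail_{\DD}(\En{P_k}[:j+1]) = \En{P_k[i:j+1]}$, observe that its parent is $\En{P_k[i:j]} \in \Pref(\En{\DD})$, and apply \Cref{lem:fail_rec} to reach that parent from $\En{P_k}[:j]$. The only difference is that you spell out what the paper dismisses as ``clearly'' — using property (2) of the prefix encoding to justify that truncation commutes with encoding, and checking the boundary case $i = j+1$ — which is a welcome level of rigor but not a different argument.
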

\begin{proof}
	Let $\En{P_k[i:j+1]} = \Fail_{\DD}(\En{P_k}[:j+1])$.
	From the condition $\Fail_{\DD}(\En{P_k}[:j+1]) \ne \varepsilon$,
	we have $i \le j+1$.
	Since $\En{P_k[i:j+1]} \in \Pref(\En{\DD})$,
	clearly $\En{P_k[i:j]} \in \Pref(\En{\DD})$
	and $\En{P_k[i:j]}$ is the parent of $\En{P_k[i:j+1]}$.
	By \Cref{lem:fail_rec},
	$\En{P_k[i:j]} = \Fail_{\DD}^q(\En{P_k}[:j])$ for some $q \ge 0$.
	\qed
\end{proof}
\Cref{lem:fail_rec} implies that any re-encoded suffix $\En{P_k[i:j]}$ of $\En{P_k}[:j]$ such that $\En{P_k[i:j]} \in \Pref(\En{\DD})$
can be found by executing failure function recursively starting from $\Fail_{\DD}(\En{P_k}[:j])$.
Moreover, \Cref{lem:fail_next} implies that for any state $\En{P_k}[:j+1]$
such that $\Fail_{\DD}(\En{P_k}[:j+1]) \ne \varepsilon$,
the parent of $\Fail_{\DD}(\En{P_k}[:j+1])$
can be found by executing failure function recursively starting from $\Fail_{\DD}(\En{P_k}[:j])$.

Last, the output function $\Out_{\DD}$ of $\SCERA(\DD)$ is defined as follows.
\begin{definition}[Output function]
	The output function $\Out_{\DD}$ of $\SCERA(D)$ is defined by $\Out_{\DD}(\En{P_k}[:j]) = \{P \in \DD \mid P \approx P_k[i:j] \mbox{ for some } 1 \le i \le j\}$.
\end{definition}
For a state $\En{P_k}[:j]$, $\Out_{\DD}(\En{P_k}[:j])$
is the set patterns that \AMatch{} some suffix of $P_k[:j]$.

The output function has the following properties.
\begin{lemma}\label{lem:out_fail}
	For any $P \in \Out_{\DD}(\En{P_k}[:j])$,
	$\En{P} = \Fail^q_{\DD}(\En{P_k}[:j])$ for some $q \ge 0$.
\end{lemma}
\begin{proof}
	From the definition of $\Out_{\DD}$ and prefix encoding, $P \approx P_k[i:j]$ and $\En{P} = \En{P_k[i:j]}$ for some $i$.
	By \Cref{lem:fail_rec}, 
	$\En{P_k[i:j]} = \Fail_{\DD}^q(\En{P_k}[:j])$ for some $q \ge 0$.
	\qed
\end{proof}
\begin{lemma}\label{lem:out_union}
	For any state $S \ne \varepsilon$,
	if $S \in \En{\DD}$, $\Out_{\DD}(S) = \{S\} \cup \Out_{\DD}(\Fail_{\DD}(S))$.
	Otherwise, $\Out_{\DD}(S) = \Out_{\DD}(\Fail_{\DD}(S))$.
\end{lemma}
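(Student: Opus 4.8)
The plan is to unroll the definition of $\Out_{\DD}$ and split the union over the starting positions into the ``full prefix'' case $i=1$ and the ``proper suffix'' case $i \ge 2$. Write $S = \En{P_k}[:j]$, so that by definition $\Out_{\DD}(S) = \{P \in \DD \mid P \approx P_k[i:j] \text{ for some } 1 \le i \le j\}$. First I would isolate the term $i=1$: since $\En{P_k[:j]} = \En{P_k}[:j] = S$ by property (2) of a prefix encoding and $P \approx P_k[:j]$ iff $\En{P}=S$ by property (3), the contribution of $i=1$ is exactly $\{P \in \DD \mid \En{P}=S\}$, i.e.\ the patterns whose whole encoding is $S$. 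This set is nonempty precisely when $S \in \En{\DD}$, in which case it is the set written $\{S\}$ in the statement, and it is empty otherwise; this already produces the split into the two cases of the lemma.

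It then remains to show that the contribution of the terms $2 \le i \le j$ equals $\Out_{\DD}(\Fail_{\DD}(S))$. Let $i^* = \min\{l \mid l > 1,\ \En{P_k[l:j]} \in \Pref(\En{\DD})\}$, so that $\Fail_{\DD}(S) = \En{P_k[i^*:j]}$. Pick $P_{k'}$ and $j' = j-i^*+1$ with $\Fail_{\DD}(S) = \En{P_{k'}}[:j']$; the lengths match by property (1) and $P_{k'}[:j'] \approx P_k[i^*:j]$ by property (3). Applying the output definition to this state gives $\Out_{\DD}(\Fail_{\DD}(S)) = \{P \in \DD \mid P \approx P_{k'}[i':j'] \text{ for some } 1 \le i' \le j'\}$. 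The key step is to translate each $P_{k'}[i':j']$ back into a suffix of $P_k[:j]$: because $\approx$ is an SCER, $P_{k'}[:j'] \approx P_k[i^*:j]$ implies $P_{k'}[i':j'] \approx P_k[i^*+i'-1:j]$, so by transitivity of $\approx$ we get $P \approx P_{k'}[i':j']$ iff $P \approx P_k[i^*+i'-1:j]$. As $i'$ ranges over $1 \le i' \le j'$, the index $i = i^*+i'-1$ ranges over $i^* \le i \le j$, whence $\Out_{\DD}(\Fail_{\DD}(S)) = \{P \in \DD \mid P \approx P_k[i:j] \text{ for some } i^* \le i \le j\}$.

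Finally I would bridge the range $2 \le i \le j$ with the range $i^* \le i \le j$ using the minimality of $i^*$. For any $i$ with $2 \le i < i^*$, minimality gives $\En{P_k[i:j]} \notin \Pref(\En{\DD})$; but if some $P \in \DD$ satisfied $P \approx P_k[i:j]$, then $\En{P} = \En{P_k[i:j]}$ by property (3), and since $\En{P} \in \En{\DD} \subseteq \Pref(\En{\DD})$ this would contradict minimality. Hence no pattern matches $P_k[i:j]$ for $2 \le i < i^*$, so the two ranges contribute the same patterns and the $i \ge 2$ part of $\Out_{\DD}(S)$ equals $\Out_{\DD}(\Fail_{\DD}(S))$. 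The edge case $\Fail_{\DD}(S)=\varepsilon$ (that is, $i^*=j+1$) is consistent, since $\Out_{\DD}(\varepsilon)=\emptyset$ because its defining range is empty, and minimality again rules out any match with $2 \le i \le j$. Combining the two parts gives $\Out_{\DD}(S) = \{S\} \cup \Out_{\DD}(\Fail_{\DD}(S))$ when $S \in \En{\DD}$ and $\Out_{\DD}(S) = \Out_{\DD}(\Fail_{\DD}(S))$ otherwise. I expect the main obstacle to be this middle step: rewriting the output of the failure target, which is phrased in terms of a possibly different pattern $P_{k'}$, back into suffixes of $P_k[:j]$ via substring consistency, together with the minimality argument that reconciles the two index ranges.
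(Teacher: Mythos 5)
Your proof is correct, but it takes a genuinely different route from the paper. The paper proves the two inclusions by contradiction using \Cref{lem:out_fail}: any $P$ in an output set satisfies $\En{P} = \Fail_{\DD}^q(\cdot)$ for some $q$, and since $\Fail_{\DD}^q(\Fail_{\DD}(S)) = \Fail_{\DD}^{q+1}(S)$, the failure chain of $\Fail_{\DD}(S)$ is a tail of that of $S$; the only element of $\Out_{\DD}(S)$ that can escape $\Out_{\DD}(\Fail_{\DD}(S))$ is the one with $q=0$, i.e.\ $S \in \En{\DD}$ itself. You instead never touch the failure iterates: you unroll the definition of $\Out_{\DD}$, split the suffix-start index into $i=1$ versus $i \ge 2$, rewrite $\Out_{\DD}(\Fail_{\DD}(S))$ (phrased via a possibly different pattern $P_{k'}$) back into suffixes of $P_k[:j]$ using substring consistency and transitivity of $\approx$, and reconcile the index ranges via minimality of $i^*$ in the definition of $\Fail_{\DD}$. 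The paper's argument is shorter and matches the operational view that drives \Cref{alg:output}, but it silently uses the converse of \Cref{lem:out_fail} (that $\En{P} = \Fail_{\DD}^{q+1}(S)$ implies $P \in \Out_{\DD}(S)$, which requires knowing failure iterates of $S$ are encoded suffixes of $P_k[:j]$). Your argument is more elementary and self-contained: it needs neither \Cref{lem:fail_rec} nor \Cref{lem:out_fail}, it makes explicit the well-definedness of $\Out_{\DD}$ across different representations $\En{P_{k'}}[:j']$ of the same state (which the paper glosses over), and it handles the $\Fail_{\DD}(S)=\varepsilon$ edge case explicitly, at the cost of heavier index bookkeeping.
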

\begin{proof}
	Assume there is a pattern $P$ such that $P \in \Out_{\DD}(\Fail_{\DD}(S))$ but  $P \not \in \Out_{\DD}(S)$.
	By \Cref{lem:out_fail}, there exists $q$ such that $\En{P} = \Fail_{\DD}^q(\Fail_{\DD}(S))$.
	Since $\Fail_{\DD}^q((\Fail_{\DD}(S)) = \Fail_{\DD}^{q+1}(S)$,
	we have $P \in \Out_{\DD}(S)$ which contradicts the assumption.
	
	Next, assume there is a pattern $P$ such that $P \in \Out_{\DD}(S)$ but $P \not \in \Out_{\DD}(\Fail_{\DD}(S))$.
	By \Cref{lem:out_fail}, there is $q$ such that $\En{P} = \Fail^q_{\DD}(S)$.
	If $q > 0$, $\En{P} = \Fail^{q-1}_{\DD}(\Fail(S))$ implies $P \in \Out_{\DD}(\Fail_{\DD}(S))$
	which contradicts the assumption.
	Therefore, the remaining possibility is $q = 0$ which implies $S = \En{P} \in \En{\DD}$.
	\qed
\end{proof}
\Cref{lem:out_union} implies that we can compute $\Out_{\DD}(S)$ by copying 
 $\Out_{\DD}(\Fail_{\DD}(S))$ and adding $S$ if $S \in \DD$.
We will utilize \Cref{lem:out_union} to construct the output function efficiently.

\subsection*{Implementation and Space complexity}
We will describe how to implement SCER automata and show its space complexity.
First, The goto function $\Goto_{\DD}$ can be implemented by using an associative array on each state.
We have the following lemma for the required space and time to implement the goto function of $\SCERA(\DD)$.
\begin{lemma}\label{lem:gotosize}
	Assume that the size of any symbol in $\Pi$ is constant.
	The goto function $\Goto_{\DD}$ of $\SCERA(\DD)$ can be implemented in $O(m)$ space.
\end{lemma}
\begin{proof}
	The number of associative arrays used to implement $\Goto_{\DD}$ is $O(|\CS|)$.
	Since for each $S'$ there only exists one pair $(S.c) \in \mathcal{S} \cdot \Pi$ such that $\Goto(S,c) = S'$ ,
	the total size of associative arrays is $O(|\CS|)$.
	Therefore, $\Goto_{\DD}$ can be implemented in $O(m)$ space by \Cref{lem:state_num}.
	\qed
\end{proof}

Next, the failure function can be implemented by using a state pointer on each state.
\begin{lemma}\label{lem:failsize}
	For a dictionary $\DD$,
	$\Fail_{\DD}$ can be implemented in $O(m)$ space.
\end{lemma}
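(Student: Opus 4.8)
The plan is to store the failure function explicitly, as a single pointer field attached to each state. First I would observe that, directly from its definition, $\Fail_{\DD}$ is a single-valued function: every state $S \in \CS_{\DD}$ is mapped either to exactly one other state or to $\Null$. Consequently it suffices to equip each state with one field that holds a reference to the target state $\Fail_{\DD}(S)$, using a reserved value (or flag) to encode the $\Null$ case.

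Next I would bound the total number of such pointers. Since each state carries precisely one failure pointer, the number of pointers equals the number of states $|\CS_{\DD}|$, which is $O(m)$ by \Cref{lem:state_num}. Under the word RAM model a reference to a state occupies $O(1)$ space, so these pointers together use $O(m)$ space. Combining this with the $O(m)$ space already accounted for by the states themselves yields the claimed bound.

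There is essentially no technical obstacle here; the only point worth stating explicitly is that $\Fail_{\DD}$ is single-valued, so that one pointer per state genuinely suffices. This is immediate from the definition of the failure function, and it is precisely what distinguishes $\Fail_{\DD}$ from the output function $\Out_{\DD}$, whose values are \emph{sets} and would in principle require more space were it not for the sharing structure supplied by \Cref{lem:out_union}.
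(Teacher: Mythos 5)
Your proof is correct and follows essentially the same approach as the paper: implement $\Fail_{\DD}$ as one pointer per state and invoke \Cref{lem:state_num} to bound the number of states by $O(m)$. The paper's proof is just a terser version of your argument, so the extra details you supply (single-valuedness, constant-size pointers in the word RAM model) only make the same reasoning more explicit.
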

\begin{proof}
	Since $\Fail_{\DD}$ is defined for each state,
	$\Fail_{\DD}$ can be implemented using $O(|\CS|)$ space.
	Therefore, $\Fail_{\DD}$ can be implemented in $O(m)$ space by \Cref{lem:state_num}.
	\qed
\end{proof}

Last, similarly to the original AC-automata,
the output function $\Out_{\DD}$ can be implemented in linear space by using a list.
\begin{lemma}\label{lem:outputsize}
	For a dictionary $\DD$,
	$\Out_{\DD}$ can be implemented in $O(m)$ space.
\end{lemma}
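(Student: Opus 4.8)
The plan is to avoid materializing each set $\Out_{\DD}(S)$ explicitly, since a single output set may contain as many as $d$ patterns and the sum of $|\Out_{\DD}(S)|$ over all states can be as large as $\Theta(md)$, far more than $O(m)$. Instead I would represent the output function implicitly by exploiting the recursive decomposition of \Cref{lem:out_union}, which shows that $\Out_{\DD}(S)$ is obtained by walking the failure chain $S, \Fail_{\DD}(S), \Fail_{\DD}^2(S), \dots$ and collecting the patterns contributed at each state that lies in $\En{\DD}$.

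First I would attach to every accepting state $S \in \En{\DD}$ the list of patterns $P_k \in \DD$ with $\En{P_k} = S$. Since each pattern $P_k$ belongs to exactly one such list, namely the one at its own encoding $\En{P_k}$, these lists are pairwise disjoint and their total length is exactly $d = |\DD|$, hence $O(m)$. Next I would store at every state $S$ a single \emph{output-link} pointer to the nearest failure ancestor $\Fail_{\DD}^q(S)$ with $q \ge 1$ that is accepting (and $\Null$ if none exists). By \Cref{lem:out_union} the accepting states reachable by iterating $\Fail_{\DD}$ from $S$ are precisely those whose pattern lists together make up $\Out_{\DD}(S)$, so following output links from $S$ visits each contributing accepting state once and lets us read off $\Out_{\DD}(S)$ without ever storing it. This costs one pointer per state, i.e. $O(|\CS|)$ pointers in total.

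Combining the two pieces, the output function occupies $O(|\CS|)$ space for the output-link pointers plus $O(d)$ space for the pattern lists, which is $O(m)$ by \Cref{lem:state_num} together with $d \le m$. The only point requiring care is the case where several patterns share a common encoding and therefore collapse onto the same accepting state; storing them in a per-state list, rather than duplicating them along failure chains, is exactly what keeps the aggregate pattern storage linear in $d$ and prevents the blow-up that a naive explicit representation would incur.
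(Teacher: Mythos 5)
Your proposal is correct and takes essentially the same route as the paper: the paper's proof likewise stores at every state a pattern identifier (when the state lies in $\En{\DD}$, else $\Null$) together with a pointer to the nearest accepting state strictly below it on the failure chain, and invokes \Cref{lem:state_num} to conclude $O(m)$ space. Your only deviation is keeping a \emph{list} of patterns at each accepting state rather than a single pattern number, which is in fact slightly more careful than the paper, since distinct dictionary patterns that are $\approx$-equivalent have identical encodings and collapse onto the same state.
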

\begin{proof}
	Each state $S$ stores a pair $(i,p)$,
	where $i$ is the pattern number if $S \in \En{\DD}$ or $\Null$ otherwise
	and $p$ is a pointer to a state 
	$S' = \Fail^q_{\DD}(\En{P_k}[:j])$ for the smallest $q>0$
	such that $S' \in \En{\DD}$ if it exists or $\Null$ otherwise.
	Since $|\CS|$ is $O(m)$ by \Cref{lem:state_num},
	$\Out_{\DD}$ can be implemented in $O(m)$ space.
	\qed
\end{proof}
From Lemmas~\ref{lem:state_num}, \ref{lem:gotosize}, \ref{lem:failsize},
and \ref{lem:outputsize}, we get the following theorem.
\begin{theorem}
	Assume that the size of any symbol in $\Pi$ is constant.
	For a dictionary $\DD=\{P_1,P_2,\dots,P_d\}$ of total size $\| \DD \|=m$,
	$\SCERA(\DD)$ can be implemented in $O(m)$ space.
\end{theorem}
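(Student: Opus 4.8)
The plan is to decompose the space footprint of $\SCERA(\DD)$ into its four constituent pieces — the state set $\CS_{\DD}$ together with the goto, failure, and output functions — bound each piece separately by $O(m)$, and then sum. Since the individual bounds have already been established as lemmas, the theorem is essentially an assembly step, and the proof should make explicit which lemma accounts for which component.

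First I would account for the states themselves. By \Cref{lem:state_num}, the number of states $|\CS_{\DD}|$ is $O(m)$, so merely storing the state objects (each identified with its corresponding prefix in $\Pref(\En{\DD})$) costs $O(m)$ space. Next I would invoke the three implementation lemmas in turn: \Cref{lem:gotosize} bounds the associative-array representation of $\Goto_{\DD}$ by $O(m)$, relying on the assumption that each symbol of $\Pi$ has constant size and on the observation that each state is the goto-image of at most one parent/symbol pair; \Cref{lem:failsize} bounds the per-state failure pointers by $O(m)$; and \Cref{lem:outputsize} bounds the output function by $O(m)$.

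The only point that deserves emphasis — and the closest thing here to a genuine obstacle — is the output function, since a naive representation that materializes each full set $\Out_{\DD}(\En{P_k}[:j])$ could blow up to $\Theta(m^2)$ total size. The argument would stress that \Cref{lem:outputsize} avoids this by storing at each state only a pair $(i,p)$: a single pattern index (or $\Null$) together with one pointer along the failure chain to the nearest ancestor lying in $\En{\DD}$. This compact encoding is justified by \Cref{lem:out_union}, which shows each output set is its predecessor's set augmented by at most one pattern, so the sets need never be stored explicitly.

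Finally I would combine the four $O(m)$ bounds. Their sum is $O(m)$, and since these four components exhaust the data stored by $\SCERA(\DD)$, the automaton can be implemented in $O(m)$ space, which completes the proof.
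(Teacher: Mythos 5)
Your proposal is correct and matches the paper's own argument, which obtains the theorem directly by combining Lemmas~\ref{lem:state_num}, \ref{lem:gotosize}, \ref{lem:failsize}, and \ref{lem:outputsize}, exactly the decomposition you use. Your added emphasis on why the output function does not blow up to $\Theta(m^2)$ (the pointer-pair representation justified by \Cref{lem:out_union}) is a faithful restatement of the content already inside the proof of \Cref{lem:outputsize}, not a departure from the paper's route.
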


\subsection{Dictionary matching using SCERA}

\begin{algorithm2e}[t]
	\caption{A \ADic{} algorithm using $\SCERA(\DD)$}
	\label{alg:match}
	\SetVlineSkip{0.5mm}
	\Fn{$\Match(T)$}{
		compute $\En{T}$\;
		$v \leftarrow \Root$\;
		\For{$i \leftarrow 1$ \bf{to} $|T|$}{
			\lWhile{$\Goto(v, \En{T[i-\Depth(v):i]}[\Depth(v)+1]) = \Null $}{
				$v \leftarrow \Fail(v)$}
			$v \leftarrow \Goto(v, \En{T[i-\Depth(v):i]}[\Depth(v)+1])$\;
			\lFor{$j \in \Out(v)$}{
				\bf{output} $(j,i-|P_j|+1)$}
		}
	}
\end{algorithm2e}

In this section, we describe how to use SCER automata for dictionary matching.
The proposed algorithm for \ADic{} is shown in \Cref{alg:match}. 
For any state $S$, let $\Depth(S)$ be the depth of $S$ i.e. the length of the shortest path from $\Root$ to $S$.
In order to simplify the algorithm we use an auxiliary state $\bot$
where $\Goto(\bot,c) = \Root$ for any $c \in \Pi$, $\Fail(\Root) = \bot$,
and $\Depth(\bot) = 0$.

The algorithm starts with $\Root$ as the active state and $1$ as the active position.
The algorithm reads the encoded text from the left to the right,
while updating the active state and the active position.
Let $v = \En{T[i-\Depth(v):i-1]}$ be the active state and $i$ be the active position.
The algorithm finds $\En{T[i-\Depth(v):i]}[\Depth(v)]$ transition from $v$.
If $\En{T[i-\Depth(v):i]}[\Depth(v)]$ transition exists,
the algorithm updates the active state to $\Goto(v, \En{T[i-\Depth(v):i]}[\Depth(v)])$
and increments $i$.
After the algorithm updates the active state,
it outputs all patterns in $\Out(v)$.
Otherwise if $\En{T[i-\Depth(v):i]}[\Depth(v)]$ transition does not exist, the algorithm updates the active state to $\Fail(v)$
without updating $i$.
The algorithm repeats these operations until it reads all the text.

\begin{lemma}\label{lem:longest_suffix}
	Let $v = \En{T[i-\Depth(v):i-1]}$ be the active state and $i$ be the active position.
	$T[i-\Depth(v):i-1]$ is the longest suffix of $T[:i-1]$
	such that $v \in \Pref(\En{\DD})$.
\end{lemma}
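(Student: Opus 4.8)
The plan is to prove the statement as a loop invariant of \Cref{alg:match} by induction on the active position $i$. Concretely, I would show that whenever the algorithm is about to process position $i$ with active state $v$, we have $v = \En{T[i-\Depth(v):i-1]}$ and, among all suffixes $T[j:i-1]$ of $T[:i-1]$, the window $T[i-\Depth(v):i-1]$ is the longest one whose re-encoding lies in $\Pref(\En{\DD})$. For the base case $i=1$ the active state is $\Root = \varepsilon$ with $\Depth(\varepsilon)=0$, so the window is the empty suffix, which re-encodes to $\varepsilon \in \Pref(\En{\DD})$ and is the only (hence longest) candidate.

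For the inductive step I would analyse a single iteration of the for loop, which turns $(v,i)$ into $(v', i+1)$. The key combinatorial fact is a correspondence between candidate suffixes at positions $i-1$ and $i$: if $W$ is any suffix of $T[:i]$ with $\En{W}\in\Pref(\En{\DD})$, then dropping its last character $T[i]$ yields a suffix $W'$ of $T[:i-1]$ with $\En{W'} = \En{W}[:|W|-1] \in \Pref(\En{\DD})$, using property (2) of the prefix encoding together with the fact that a prefix of an element of $\Pref(\En{\DD})$ again lies in $\Pref(\En{\DD})$; conversely $W = W' \cdot T[i]$ is a length-$(|W'|+1)$ suffix of $T[:i]$ exactly when $\En{W'}\cdot c \in \Pref(\En{\DD})$, where $c$ is the re-encoding of $T[i]$ relative to $W'$. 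Hence the longest nonempty suffix of $T[:i]$ whose re-encoding is a state corresponds to the longest suffix of $T[:i-1]$ whose re-encoding is a state and can be extended by $T[i]$.

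Next I would show that the while loop finds exactly this longest extendable suffix. By the inductive hypothesis $v$ is the longest-window state ending at $i-1$, so every suffix of $T[:i-1]$ whose re-encoding is a state is a suffix of the window $T[i-\Depth(v):i-1]$; by \Cref{lem:fail_rec} these re-encodings are precisely $\Fail^q(v)$ for $q \ge 0$, listed in strictly decreasing length as $q$ grows. For each such state $w=\Fail^q(v)$, SCER consistency (via property (3), $\En{X}=\En{Y}$ iff $X\approx Y$) guarantees $w = \En{T[i-\Depth(w):i-1]}$, so the symbol $c = \En{T[i-\Depth(w):i]}[\Depth(w)+1]$ computed in the algorithm is the re-encoding of $T[i]$ relative to $w$, and by property (2) we get $w \cdot c = \En{T[i-\Depth(w):i]}$; thus $\Goto(w,c)\neq\Null$ iff $\En{T[i-\Depth(w):i]}\in\Pref(\En{\DD})$, i.e.\ iff $w$ is extendable by $T[i]$. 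Therefore the while loop tests candidates from longest to shortest and stops at the longest extendable $w$, after which the goto sets $v' = \Goto(w,c) = \En{T[i-\Depth(w):i]}$. Since $\Depth(v') = \Depth(w)+1$, the window of $v'$ is $T[(i+1)-\Depth(v'):i]$, and by the correspondence above it is the longest suffix of $T[:i]$ with re-encoding in $\Pref(\En{\DD})$, which is the invariant at position $i+1$. The edge case where no nonempty suffix is extendable is handled by $\bot$: the chain reaches $\Root$, $\Goto(\Root,c)=\Null$, and $\Goto(\bot,c)=\Root$ restores $v'=\varepsilon$, correctly giving the empty longest suffix.

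The step I expect to be the main obstacle is the one invoking SCER consistency in the previous paragraph: in the classical Aho--Corasick argument suffixes of node labels are literally suffixes of the scanned text, but here the automaton is built over re-encoded strings, so I must argue that each failure link $\Fail^q(v)$ really equals $\En{T[i-\Depth(\cdot):i-1]}$ and that the single-symbol encoding $c$ used for the goto test is the correct contextual encoding of $T[i]$. This hinges on combining the substring consistency of $\approx$ with properties (2) and (3) of the prefix encoding, so that taking re-encoded suffixes commutes with following failure links; once this alignment is established, the remaining reasoning is the standard longest-suffix bookkeeping.
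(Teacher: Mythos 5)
Your proof is correct and follows essentially the same route as the paper's: an induction on the active position in which \Cref{lem:fail_rec} shows that the failure chain from $v$ enumerates, in decreasing length, exactly the suffixes of $T[:i-1]$ whose re-encodings lie in $\Pref(\En{\DD})$, so the while loop stops at the longest one extendable by $T[i]$ and the goto yields the new longest window. The only notable difference is that you make explicit the alignment step---that each $\Fail^q(v)$ equals $\En{T[i-\Depth(\Fail^q(v)):i-1]}$, via substring consistency and properties (2)--(3) of the prefix encoding---which the paper's proof uses implicitly when it applies \Cref{lem:fail_rec} (stated for pattern suffixes) directly to text windows.
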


\begin{proof}
	We will prove by induction.
	Initially, $v = \Root$ and $i = 1$,
	thus $v=\varepsilon$ is the longest suffix of $\En{T[:0]} = \varepsilon$.
	Assume that $v = T[i-\Depth(v):i-1]$ is the longest suffix of $T[:i-1]$
	such that $v \in \Pref(\En{\DD})$.
	Let $u$ be the next active state and $i$ be the next active position.
	From the algorithm, $u = \Goto(\Fail^q(v),\En{T[i-\Depth(\Fail^q(v)):i]}[\Depth(\Fail^q(v))+1])$,
	where $q$ is the smallest integer such that
	$\Goto(\Fail^q(v),\En{T[i-\Depth(\Fail^q(v)):i]}[\Depth(\Fail^q(v))+1]) \ne \Null$.
	Let $T[i-j:i]$ be the longest suffix of $T[:i]$ such that $\En{T[i-j:i]} \in \Pref(\En{\DD})$.
	If $\En{T[i-j:i]} = \varepsilon$,
	we have $\Goto(\En{T[i-l:i-1]},\En{T[i-l:i]}[l+1]) = \Null$ for any $l$, $0 \le l \le \Depth(v)$, such that $\En{T[i-l:i-1]} \in \Pref(\En{\DD})$.
	Thus, we have $\Fail^q(v) = \bot$ and $u = \Root$.
	Otherwise if $\En{T[i-j:i]} \ne \varepsilon$,
	we have $\En{T[i-j:i-1]} \in \Pref(\En{\DD})$.
	Moreover, $\En{T[i-j:i-1]} = \Fail^{q'}(v)$ for some ${q'}$ by \Cref{lem:fail_rec} and $\Goto(\En{T[i-j:i-1]},\En{T[i-j:i]}[j+1]) \ne \Null$.
	Since $T[i-j:i]$ is the longest suffix of $T[:i]$ such that $\En{T[i-j:i]} \in \Pref(\En{\DD})$, $\Goto(\En{T[i-l:i-1]},\En{T[i-l:i]}[l+1]) = \Null$, for any $l$, $j < l \le \Depth(v)$, such that $\En{T[i-l:i-1]} \in \Pref(\En{\DD})$.
	Therefore, $q'=q$ which implies the correctness of \Cref{lem:longest_suffix}.
	\qed
\end{proof}

\begin{theorem}
	Given $\SCERA(\DD)$ and a text $T$ of length $n$, 
	\Cref{alg:match} outputs all occurrence positions of all patterns in $T$ correctly
	in $O(\EnTime{n} + n \cdot (\REnTime{\ell}+\log\pi) + \OCC)$ time,
	where $\ell$ is the length of the longest pattern in $\DD$,
	$\EnTime{n}$ is the time required to encode $T$,
	$\REnTime{\ell}$ is the time required to re-encode a symbol of a substring of $T$ whose length is $\ell$ or less,
	$\pi$ is the maximum number of possible outgoing transitions from any state,
	and $\OCC$ is the number of occurrences of the patterns in $T$.
\end{theorem}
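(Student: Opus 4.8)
The plan is to establish correctness via the invariant of \Cref{lem:longest_suffix} and then bound the running time by an amortized analysis of the failure-link traversals combined with the per-step re-encoding cost.

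For correctness, I would first invoke \Cref{lem:longest_suffix}: after the goto transition that advances the active position past $i$, the active state $v$ equals $\En{T[i-\Depth(v):i]}$, the encoding of the longest suffix of $T[:i]$ lying in $\Pref(\En{\DD})$. By the definition of $\Out_{\DD}$, the set $\Out(v)$ consists exactly of those patterns $P_j$ with $P_j \approx T[i-|P_j|+1:i]$ and $|P_j| \le \Depth(v)$. The key observation is that no pattern longer than $\Depth(v)$ can have an \AOccur{} ending at $i$: if $P_j \AMatch es T[i-|P_j|+1:i]$, then $\En{P_j} = \En{T[i-|P_j|+1:i]} \in \Pref(\En{\DD})$, so the maximality in \Cref{lem:longest_suffix} forces $|P_j| \le \Depth(v)$. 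Hence $\Out(v)$ captures precisely the \AOccurs{} ending at position $i$, and ranging over all $i$ yields every \AOccur{} of every pattern exactly once.

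For the running time, computing $\En{T}$ once costs $\EnTime{n}$. The main loop runs $n$ iterations, and the crucial quantity to control is the total number of failure-link traversals $v \leftarrow \Fail(v)$ over the whole execution. I would use $\Depth(v)$ as a potential: every successful goto transition raises the depth by exactly one, whereas every failure step strictly lowers it, since $\Fail(v)$ corresponds to a proper suffix of the string of $v$ and therefore to a shallower state. The total depth increase is thus $n$ (one per iteration), and because the depth starts at $0$ and remains nonnegative, the total depth decrease --- hence the number of failure steps --- is $O(n)$. Consequently the total number of goto evaluations, counting both those tested in the while condition and the final successful one, is also $O(n)$. Each such evaluation first computes a single re-encoded symbol $\En{T[i-\Depth(v):i]}[\Depth(v)+1]$ in $\REnTime{\ell}$ time, since the underlying substring has length at most $\ell$, and then performs one associative-array lookup in $O(\log\pi)$ time; a fresh re-encoding is required after each failure step because $\Depth(v)$ changes. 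This accounts for $O(n \cdot (\REnTime{\ell} + \log\pi))$ time in total. Finally, with the output representation of \Cref{lem:outputsize}, in which each state stores a pointer to the nearest proper failure-ancestor that is a full pattern, enumerating $\Out(v)$ at each position takes time linear in $|\Out(v)|$, so the aggregate output cost is $O(\OCC)$. Adding the three contributions yields the claimed $O(\EnTime{n} + n \cdot (\REnTime{\ell} + \log\pi) + \OCC)$ bound.

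The main obstacle is the interplay between the amortized failure-link analysis and the re-encoding cost. In the classical Aho--Corasick setting each transition consumes a fixed symbol, but here the symbol to be matched, $\En{T[i-\Depth(v):i]}[\Depth(v)+1]$, depends on $\Depth(v)$ and must be recomputed after every failure step. The care needed is to confirm that, although each of the $O(n)$ goto and failure steps pays the full $\REnTime{\ell}$ re-encoding price, the number of such steps nonetheless remains $O(n)$ by the potential argument, so the re-encoding does not inflate the bound. A secondary point deserving attention is the maximality argument in the correctness proof, which guarantees that a pattern longer than the current active depth cannot give rise to a missed occurrence.
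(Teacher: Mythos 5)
Your proposal is correct and follows essentially the same route as the paper: correctness rests on \Cref{lem:longest_suffix} together with the definition of the output function (the paper phrases this as a proof by contradiction on a missed occurrence, while you argue directly that $\Out(v)$ captures exactly the \AOccurs{} ending at each position, but the substance is identical), and the running-time bound uses the same depth-based amortization of failure and goto steps, with re-encoding and $O(\log\pi)$ lookup charged per goto evaluation and $O(\OCC)$ for output enumeration.
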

\begin{proof}
	First, we show the correctness of the algorithm.
	Assume there is an occurrence position $o$ of $P_k$ in $T$ that has not been output by the algorithm.
	Let $o+|P_k|$ be the active position.
	By \Cref{lem:longest_suffix}, 
	the active state $v = \En{T[j:o+|P_k|-1]}$ is the longest suffix of $T[:o+|P_k|-1]$
	such that $\En{T[j:o+|P_k|-1]} \in \Pref(\En{D})$.
	Since $\En{T[o:o+|P_k|-1]} \in \Pref(\En{D})$, we have $j \le o$.
	By the definition of the output function,
	$P_k \in \Out(\En{T[j:o+|P_k|-1]})$.
	Therefore, the algorithm outputs $P_k \in \Out(\En{T[j:o+|P_k|-1]})$ which contradicts the assumption.
	Moreover, for any active state $\En{T[i:j]}$ the algorithm only outputs an occurrence position of $P_k$ iff $P_k$ \AMatch es a suffix of $T[i:j]$
	by \Cref{lem:out_union}.
	
	Next, we show the time complexity of the algorithm.
	The encoding $\En{T}$ can be computed in $\EnTime{n}$.
	For each position $i$, the depth of the active position is increased by one,
	thus the depth of the active position is increased by $n$ in total.
	Since the depth of the active position is decreased by at least one each time $\Fail$ is executed,
	$\Fail$ is executed at most $n$ times.
	Next, each time $\Goto$ is executed, either the depth of the active position is increased by one or $\Fail$ is executed,
	thus $\Goto$ is executed $O(n)$ times.
	Since we need to re-encode a symbol each time $\Goto$ is executed and 
	$\Goto$ can be executed in $O(\log\pi)$ time by binary search,
	the algorithm takes $O(n \cdot (\REnTime{\ell}+\log\pi))$ time to execute $\Goto$ in total.
	In order to output the occurrence positions, 
	the algorithm takes $O(n)$ time to check whether there is any occurrence
	and $O(\OCC)$ time to output the occurrence positions.
	\qed
\end{proof}
\subsection{Constructing SCERA}

\begin{algorithm2e}[t]
	\caption{Computing goto function of $\SCERA(\DD)$}
	\label{alg:goto}
	\SetVlineSkip{0.5mm}
	\Fn{$\ConstGoto(\DD)$}{
		compute $\En{\DD}$\;
		create states $\Root$ and $\bot$\;
		$\Depth(\Root) \leftarrow 0$;
		$\Depth(\bot) \leftarrow 0$\;
		$\Goto(\bot,c) \leftarrow \Root$ for any symbol $c$\;
		\For{$k \leftarrow 1$ \bf{to} $d$}{
			$v \leftarrow \Root$\;
			\For{$j \leftarrow 1$ \bf{to} $|P_k|$}{
				\lIf{$\Goto(v,\En{P_k}[j]) \ne \Null$}{
					$v \leftarrow \Goto(v,\En{P_k}[j])$}
				\Else{
					create a state $u$\;
					$\Goto(v,\En{P_k}[j]) \leftarrow u$\;
					$\Label(u) \leftarrow i$;
					$\Depth(u) \leftarrow \Depth(v) + 1$\;
					$v \leftarrow u$\;
				}
			}
		}
	}
\end{algorithm2e}

\begin{algorithm2e}[t]
	\caption{Computing failure function of $\SCERA(\DD)$}
	\label{alg:failure}
	\SetVlineSkip{0.5mm}
	\Fn{$\ConstFail(\DD)$}{
		compute $\En{\DD}$;
		$\Fail(\Root) \leftarrow \bot$;
		push $\Root$ to $\Queue$\;
		\While{$\Queue \ne \emptyset$}{
			pop $v$ from $\Queue$\;
			\For{$c$ \textup{such that} $\Goto(v,c) \ne \Null$}{
				$u \leftarrow \Goto(v,c)$;
				push $u$ to $\Queue$\;
				$s \leftarrow \Fail(v)$;
				$k \leftarrow \Label(u)$\;
				\While{$\Goto(s, \En{P_k[\Depth(u)-\Depth(s):\Depth(u)]}[\Depth(s)+1]) = \Null $}{
					$s \leftarrow \Fail(s)$\;
				}
				$\Fail(u) \leftarrow \Goto(s,\En{P_k[\Depth(u)-\Depth(s):\Depth(u)]}[\Depth(s)+1)]))$\;
			}
		}
	}
\end{algorithm2e}

\begin{algorithm2e}[t]
	\caption{Computing output function of $\SCERA(\DD)$}
	\label{alg:output}
	\SetVlineSkip{0.5mm}
	\Fn{$\ConstOut(\DD)$}{
		compute $\En{\DD}$\;
		\For{$k \leftarrow 1$ \bf{to} $d$}{
			$v \leftarrow \Root$\;
			\For{$j \leftarrow 1$ \bf{to} $|P_k|$}{
				$v \leftarrow \Goto(v,\En{P_k}[j])$\;
				\lIf{$j = |P_k|$}{
					$\Out(v) \leftarrow \Out(v) \cup \{k\}$}
			}
		}
		push $\Root$ to $\Queue$\;
		\While{$\Queue \ne \emptyset$}{
			pop $v$ from $\Queue$;
			$\Out(v) \leftarrow \Out(v) \cup \Out(\Fail(v))$\;
			\For{$c$ \textup{such that} $\Goto(v,c) \ne \Null$}{
				$u \leftarrow \Goto(v,c)$;
				push $u$ to $\Queue$\;
			}
		}
	}
\end{algorithm2e}

In this section, we describe an algorithm to construct $\SCERA(\DD)$.
We divide the algorithm into three parts: 
goto function, failure function, and output function construction algorithms.

First, the goto function construction algorithm is shown in \Cref{alg:goto}.
Initially, the algorithm computes $\En{P_k}$ for all $P_k \in \DD$;
then it constructs the root state $\Root$ and the auxiliary state $\bot$.
Next, for each pattern $\En{P_k} \in \En{\DD}$,
the algorithm finds the longest prefix of $\En{P_k}$ that exists in the current automaton.
After that, the algorithm creates states corresponding to the remaining prefixes
from the shortest to the longest.
After creating each state, the algorithm updates the goto function,
adds a label $i$ to the state, and compute the depth of the state.

\begin{lemma}\label{lem:gototime}
	Given a dictionary $\DD$, 
	\Cref{alg:goto} constructs the goto function of $\SCERA(\DD)$
	in $O(\EnTime{m} + m\log\pi)$ time.
\end{lemma}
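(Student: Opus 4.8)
The plan is to establish the lemma in two independent parts: first that \Cref{alg:goto} produces exactly the state set $\CS_\DD = \Pref(\En{\DD})$ together with the goto function $\Goto_\DD$, and then that it does so within the claimed time bound. The unifying observation is that \Cref{alg:goto} is simply an incremental trie construction over the encoded patterns $\En{P_1},\dots,\En{P_d}$: the outer loop inserts one encoded pattern at a time, and the inner loop walks down the existing trie as far as possible before attaching fresh states. Consequently the correctness reduces to the standard trie-insertion invariant, and the timing reduces to charging each inner-loop step a single goto operation.

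For correctness I would argue by induction on the number $k$ of patterns already inserted, maintaining the invariant that after the $k$-th outer iteration the created states are exactly $\bigcup_{l \le k}\Pref(\En{P_l})$ and that the partially built $\Goto$ agrees on them with $\Goto_\DD$. The crucial ingredient is property~(2) of a \AEn{}, namely $\En{P_k[:j]} = \En{P_k}[:j]$. This guarantees that the state $v$ reached after $j-1$ inner iterations is precisely $\En{P_k[:j-1]}$ and that the symbol $\En{P_k}[j]$ handed to $\Goto$ coincides with $\En{P_k[:j]}[j]$; hence following or creating the transition yields exactly the state $\En{P_k[:j]}$ demanded by the definition of $\Goto_\DD$. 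Inserting $\En{P_k}$ therefore adjoins precisely those prefixes of $\En{P_k}$ not already present, which preserves the invariant, and at $k=d$ the state set equals $\Pref(\En{\DD})=\CS_\DD$.

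For the time bound I would split the cost into the encoding phase and the loop phase. Computing $\En{\DD}$ costs $\EnTime{m}$ time. The nested loops then execute their inner body $\sum_{k}|P_k| = m$ times in total; each execution performs one lookup $\Goto(v,\En{P_k}[j])$ and at most one insertion of a new transition, both of which run in $O(\log\pi)$ time if the outgoing transitions of every state are kept in a balanced associative array keyed by the $\Pi$-symbol, where $\pi$ bounds the out-degree. Reading $\En{P_k}[j]$ is an $O(1)$ index into the already-computed $\En{P_k}$, and initializing $\Label$ and $\Depth$ of a freshly created state is $O(1)$. Summing yields $O(m\log\pi)$ for the loops and thus $O(\EnTime{m}+m\log\pi)$ overall; that the total number of created transitions, and hence the aggregate size of all per-state arrays, is $O(m)$ follows from \Cref{lem:state_num}.

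I expect the only genuine subtlety to be explaining why no re-encoding term $\REnTime{\cdot}$ appears here, in contrast with the failure-function construction of \Cref{alg:failure}. This is exactly where property~(2) does the work: since the encoding of a prefix is a prefix of the encoding, every symbol the algorithm needs at step $j$ is already stored in $\En{P_k}$ and is read in $O(1)$, so the sole non-trivial per-step cost is the $O(\log\pi)$ search inside $\Goto$. The remaining care is purely bookkeeping, confirming that the $O(1)$-per-step auxiliary work and the $O(\log\pi)$ goto operations combine cleanly over the $m$ inner iterations to give the stated bound.
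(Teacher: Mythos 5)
Your proposal is correct and its timing analysis is essentially the paper's own argument: encode the dictionary in $\EnTime{m}$ time, observe that the inner loop body executes $O(m)$ times, and charge each iteration an $O(\log\pi)$ goto lookup via binary search on the per-state associative arrays. The trie-insertion correctness induction and the explicit appeal to property~(2) of the \AEn{} (explaining why no $\REnTime{\cdot}$ term arises) are extra detail the paper's proof simply omits, but they match its intent.
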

\begin{proof}
	By assumption, each pattern $|P_k|$ in the dictionary can be encoded in $\EnTime{|P_k|}$ time.
	The operations in the inner loop are executed $O(m)$ times.
	$\Goto(v,c)$ can be computed in $O(\log\pi)$ by binary search.
	\qed
\end{proof}

Next, we describe how to compute the failure function of $\SCERA(\DD)$.
\Cref{alg:failure} shows the algorithm for computing the failure function.
The algorithm computes the failure function recursively by the breadth-first search.
The algorithm uses the property in \Cref{lem:fail_next} to compute the failure function.

Consider computing $\Fail(\En{P_k}[:j])$.
Since the algorithm computing $\Fail$ by the breadth-first search,
$\Fail(\En{P_k}[:j-1])$ has been computed.
By \Cref{lem:fail_next}, there is $q \ge 1$ such that $\Fail_{\DD}^q({\En{P_k}}[:j-1])$ is the parent of $\Fail_{\DD}(\En{P_k}[:j])$
or $\Fail_{\DD}(\En{P_k}[:j]) = \varepsilon$.
We can find $q$ by executing the failure function recursively from $\En{P_k}[:j-1]$
and checking whether $\Goto(\En{P_k[i:j]},\En{P_k[i:]}[j-i+1]) = \Null$.

\begin{lemma}\label{lem:failtime}
	Given a dictionary $\DD$ and goto function of $\SCERA(\DD)$, 
	\Cref{alg:failure} constructs the failure function of $\SCERA(\DD)$
	in $O(\EnTime{m} + m \cdot (\REnTime{\ell}+\log\pi))$ time.
\end{lemma}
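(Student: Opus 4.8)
The plan is to prove correctness and the running time separately, with an amortized analysis of the total number of failure-link traversals forming the core of the argument. For correctness I would exploit that \Cref{alg:failure} processes states in breadth-first order: when a state $v$ is popped and a child $u=\Goto_{\DD}(v,c)$ is handled, the value $\Fail_{\DD}(v)$ is already set (initialized for $\Root$, or assigned when $v$'s own parent was processed). Writing $k=\Label(u)$ and $j=\Depth(v)=\Depth(u)-1$, I would invoke \Cref{lem:fail_next}: the parent of $\Fail_{\DD}(\En{P_k}[:j+1])$ equals $\Fail_{\DD}^q(\En{P_k}[:j])$ for some $q\ge 1$, where $q=0$ is impossible since $\Fail_{\DD}(u)$ is a \emph{proper}, hence strictly shallower, suffix of $u$. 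Thus starting $s$ at $\Fail_{\DD}(v)$ and iterating failure links is exactly the search for that parent, and the inner loop halts at the deepest $s$ in this chain for which $\Goto_{\DD}$ is defined on the re-encoded symbol $\En{P_k[\Depth(u)-\Depth(s):\Depth(u)]}[\Depth(s)+1]$. By \Cref{lem:fail_rec} and the prefix-encoding property, this yields the longest suffix of $P_k[:\Depth(u)]$ whose encoding lies in $\Pref(\En{\DD})$, which is precisely $\Fail_{\DD}(u)$.

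For the time bound, computing $\En{\DD}$ costs $\EnTime{m}$, so the real work is the inner while loop. I would charge it against the depth of the failure targets by setting $g(S)=\Depth(\Fail_{\DD}(S))$. The key inequality is $g(u)\le g(v)+1$ for $v$ the parent of $u$, which follows from the identity $\Depth(\Fail_{\DD}(u))-1=\Depth(\Fail_{\DD}^q(v))\le\Depth(\Fail_{\DD}(v))$ obtained above. For a fixed $u$, the loop starts with $s=\Fail_{\DD}(v)$ at depth $g(v)$ and terminates with $\Fail_{\DD}(u)=\Goto_{\DD}(s,\cdot)$, so the final $s$ has depth $g(u)-1$; since every execution of $s\leftarrow\Fail_{\DD}(s)$ strictly decreases the depth, with the single exception of the $\Root\to\bot$ step which is absorbed by a $+1$, the number $t_u$ of iterations satisfies $t_u\le g(v)-g(u)+1$.

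The decisive step is to sum $t_u$ to $O(m)$. Telescoping along the path $\En{P_k}[:0],\dots,\En{P_k}[:|P_k|]$ of one pattern, whose consecutive states are in the parent–child relation, gives $\sum_{j=1}^{|P_k|}t_{\En{P_k}[:j]}\le g(\Root)-g(\En{P_k})+|P_k|\le|P_k|$, using $g(\Root)=\Depth(\bot)=0$. Because every state of $\SCERA(\DD)$ lies on at least one pattern path and each $t_u\ge 0$, summing these per-pattern bounds over all patterns only overcounts, so $\sum_u t_u\le\sum_k|P_k|=m$. Hence the loop body is entered $O(m)$ times; together with the one terminating test per state this is $O(m)$ evaluations, each costing a re-encoding in $\REnTime{\ell}$ time (the relevant substring has length $\Depth(s)+1\le\Depth(u)\le\ell$) and a $\Goto_{\DD}$ lookup in $O(\log\pi)$ time by binary search. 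Adding the $O(\log\pi)$-time final $\Goto_{\DD}$ per state and the $O(m)$ breadth-first bookkeeping yields the claimed $O(\EnTime{m}+m\cdot(\REnTime{\ell}+\log\pi))$.

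I expect the amortized bound of the previous paragraph to be the main obstacle. The tempting but wrong move is to charge the failure work against the number of trie nodes; a broom-shaped dictionary (a long shared path $\mathtt{a}^{D}$ followed by many single-character branches) has $o(m)$ states yet forces $\Theta(m)$ failure traversals, so the per-node charge fails and one must telescope per pattern instead. The second delicate point is the $\Root\to\bot$ transition, which does not decrease depth and so must be accounted for explicitly rather than folded silently into the "each failure strictly shortens the suffix" claim; I would handle it by noting $\bot$ is reached at most once per state, absorbing it into the additive $+1$ of the bound on $t_u$.
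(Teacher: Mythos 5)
Your proposal is correct and follows essentially the same route as the paper: the paper's proof is precisely the telescoping amortized bound $x_{k,j} \le \Depth(\Fail(\En{P_k}[:j-1])) - \Depth(\Fail(\En{P_k}[:j])) + 2$, summed per pattern to $2m$ failure executions, with each execution charged $\REnTime{\ell} + O(\log\pi)$ for the re-encoding and the binary-searched goto lookup. Your write-up is merely more explicit about points the paper leaves implicit (correctness via \Cref{lem:fail_next}, the $\Root\to\bot$ step, and the fact that summing per-pattern bounds over shared trie states only overcounts), but the core amortization is identical.
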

\begin{proof}
	The dictionary can be encoded in $\EnTime{m}$ time.
	The running time of \Cref{alg:failure} is bounded by the number of executions of $\Fail$.
	Let $x_{k,j}$ be the number of executions of $\Fail$ when finding $\Fail(\En{P_k[:j]})$.
	Since $x_{k,j} \le \Depth(\Fail(\En{P_k[:j-1]})) - \Depth(\Fail(\En{P_k[:j]})) + 2$,
	we can compute $\Sigma_{k=1}^d \Sigma_{j=1}^{|P_k|} x_{k,j} \le \Sigma_{k=1}^d 2|P_k| = 2m$.
	The goto function $\Goto$ is executed each time $\Fail$ be executed.
	Since we need to re-encode a substring each time $\Goto$ be executed and 
	$\Goto$ can be executed in $O(\log\pi)$ time,
	the algorithm takes $O(\EnTime{m} + m \cdot (\REnTime{\ell}+\log\pi))$ time in total.
	\qed
\end{proof}

Last, \Cref{alg:output} shows an algorithm to compute the output function of $\SCERA(\DD)$.
The algorithm first adds $k$ to $\Out(\En{P_k})$ for each $P_k \in \DD$.
Next, the algorithm updates the output function recursively by the breadth-first search.
The algorithm uses the property in \Cref{lem:out_union} to compute the output function.

Consider computing $\Out(\En{P_k}[:j])$.
Since the algorithm computes $\Out$ by the breadth-first search,
$\Out(\Fail(\En{P_k}[:j]))$ has been computed.
By \Cref{lem:out_union},
we can compute $\Out(\En{P_k}[:j])$ by just adding $k$ to $\Out(\Fail(\En{P_k}[:j]))$ if $j = |P_k|$
or by copying $\Out(\Fail(\En{P_k}[:j]))$ if $j \ne |P_k|$.
Note that it can be done efficiently by using pointers as described in Section~\ref{subsec:defprop} instead of copying the set.

\begin{lemma}\label{lem:outtime}
	Given a dictionary $\DD$, the goto function, and the failure function of $\SCERA(\DD)$, 
	\Cref{alg:output} constructs the output function of $\SCERA(\DD)$
	in $O(\EnTime{m} + m \cdot \log\pi)$ time.
\end{lemma}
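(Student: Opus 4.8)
The plan is to charge the running time of \Cref{alg:output} to three disjoint phases: the initial encoding of the dictionary, the first loop that marks the terminal state of each pattern, and the breadth-first traversal that propagates the output information along the failure links. First, I would note that computing $\En{\DD}$ takes $\EnTime{m}$ time by assumption, which accounts for the $\EnTime{m}$ term in the claimed bound.

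For the first loop, I would observe that for each pattern $P_k$ the inner loop runs $|P_k|$ times and performs a single $\Goto$ lookup per iteration, so the number of iterations summed over all patterns is $\sum_{k=1}^d |P_k| = m$. Since $\Goto(v,c)$ is a lookup in the associative array stored at $v$ and can be evaluated in $O(\log\pi)$ time by binary search, this phase contributes $O(m\log\pi)$ time; adding $k$ to $\Out(\En{P_k})$ when $j = |P_k|$ is an $O(1)$ operation.

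For the breadth-first traversal, I would use \Cref{lem:state_num} to bound the number of states by $O(m)$, so every state is pushed to and popped from the queue exactly once. For each popped state $v$ the algorithm updates $\Out(v)$ by its union with $\Out(\Fail(v))$ and then enumerates the children of $v$ to push them onto the queue. Enumerating children does not require binary search: it merely traverses the entries already present in the associative array of $v$, and since each non-root state is the child of exactly one parent, the enumeration cost summed over all states is $O(m)$.

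The hard part will be the union operation $\Out(v) \leftarrow \Out(v) \cup \Out(\Fail(v))$, which naively costs time proportional to the size of the output set and could blow up the total running time. I would resolve this using the pointer representation from the proof of \Cref{lem:outputsize}, where each state stores a single pointer to the nearest state in its failure chain that lies in $\En{\DD}$. By \Cref{lem:out_union}, $\Out(v)$ equals $\Out(\Fail(v))$ together with $\{v\}$ exactly when $v \in \En{\DD}$, so the union reduces to setting $v$'s pointer to $\Fail(v)$ if $\Fail(v) \in \En{\DD}$ and to $\Fail(v)$'s stored pointer otherwise, which is an $O(1)$ step per state. Thus the traversal costs $O(m)$ in total, and summing the three phases yields $O(\EnTime{m} + m\log\pi)$, as claimed.
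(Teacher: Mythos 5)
Your proof is correct and follows essentially the same route as the paper: encoding in $\EnTime{m}$ time, $O(m)$ loop iterations with $O(\log\pi)$ per goto lookup, and the pointer representation of \Cref{lem:outputsize} (justified by \Cref{lem:out_union}) to make each union $O(1)$ --- the paper states this pointer trick in the text just before the lemma rather than inside the proof, and your write-up simply spells out these charging arguments in more detail.
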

\begin{proof}
	The dictionary can be encoded in $\EnTime{m}$ time.
	Clearly the loops are executed $O(|\CS|)$ times in total.
	The goto function can be executed in $O(\log\pi)$ time.
	Therefore, \Cref{alg:output} runs in $O(\EnTime{m} + m \cdot \log\pi)$ time.
	\qed
\end{proof}

From Lemmas~\ref{lem:gototime}, \ref{lem:failtime}, and \ref{lem:outtime},
we get the following theorem.
\begin{theorem}
	Given a dictionary $\DD$,
	$\SCERA(\DD)$ can be constructed in $O(\EnTime{m} + m \cdot (\REnTime{\ell}+\log\pi))$ time,
	where $\ell$ is the length of the longest pattern in $\DD$,
	$\EnTime{m}$ is the time required to encode $\DD$,
	$\REnTime{\ell}$ is the time required to re-encode the last symbol of a substring of $P_k \in \DD$,
	and $\pi$ is the maximum number of possible outgoing transitions from any state.
\end{theorem}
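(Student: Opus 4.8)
The plan is to construct $\SCERA(\DD)$ by running the three construction routines in the order dictated by their dependencies, and then to add up the three running times established in Lemmas~\ref{lem:gototime}, \ref{lem:failtime}, and~\ref{lem:outtime}. First I would run \Cref{alg:goto} to build the set of states together with the goto function $\Goto_{\DD}$; this produces the underlying trie of $\En{\DD}$ and also records $\Label$ and $\Depth$ for every state, which the later phases rely on. Next I would run \Cref{alg:failure} on the resulting goto function to compute $\Fail_{\DD}$, and finally \Cref{alg:output}, which uses both $\Goto_{\DD}$ and $\Fail_{\DD}$, to compute $\Out_{\DD}$. The correctness of the composed construction is inherited from the correctness of each phase, which in turn follows from the structural properties already established (\Cref{lem:fail_rec}, \Cref{lem:fail_next}, and \Cref{lem:out_union}); no additional argument is needed here.

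For the time bound I would simply sum the three costs. By \Cref{lem:gototime} the goto phase takes $O(\EnTime{m} + m\log\pi)$ time, by \Cref{lem:failtime} the failure phase takes $O(\EnTime{m} + m \cdot (\REnTime{\ell}+\log\pi))$ time, and by \Cref{lem:outtime} the output phase takes $O(\EnTime{m} + m\log\pi)$ time. Adding these, the three $\EnTime{m}$ terms collapse into a single $\EnTime{m}$, the three $m\log\pi$ contributions collapse into one, and the only term carrying a re-encoding cost is the $m \cdot \REnTime{\ell}$ coming from the failure phase. Hence the total is $O(\EnTime{m} + m \cdot (\REnTime{\ell}+\log\pi))$, exactly the claimed bound, with the failure-function construction as the dominant phase.

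There is essentially no hard step here: the statement is a bookkeeping combination of the three preceding lemmas. The only point I would check explicitly is that the phases genuinely compose in this order, namely that \Cref{alg:failure} has access to a fully built goto function and that \Cref{alg:output} has access to both goto and failure; this is immediate from running them sequentially. One minor redundancy is that each of the three algorithms recomputes $\En{\DD}$; I would remark that this encoding can be computed once and shared across the phases, though even recomputing it three times only contributes $O(\EnTime{m})$ in total and is therefore absorbed into the stated bound.
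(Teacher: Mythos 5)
Your proposal is correct and matches the paper's own argument: the theorem is obtained by composing \Cref{alg:goto}, \Cref{alg:failure}, and \Cref{alg:output} and summing the bounds from Lemmas~\ref{lem:gototime}, \ref{lem:failtime}, and~\ref{lem:outtime}, with the failure-function phase dominating. Your remark that $\En{\DD}$ need only be computed once is a sensible minor optimization but does not change the bound.
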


\bibliographystyle{splncs04}
\bibliography{ref}

\end{document}